\newcommand{\xv}{{\vec{\rm x}}}
\newcommand{\ka}{{\vec{\rm k}}}
\newcommand{\q}{{\vec{\rm q}}}
\newcommand{\p}{{\vec{\rm p}}}
\newcommand{\vs}{{\vec{\rm v}}}
\newcommand{\n}{{\vec{\rm n}}}
\newcommand{\rr}{{\vec{\rho }}}
\newcommand{\Xv}{{\vec{\rm X}}}
\def\bldmth#1{%
\mathchoice
{{\hbox{\boldmath$\displaystyle#1$\unboldmath}}}%
{{\hbox{\boldmath$\textstyle#1$\unboldmath}}}%
{{\hbox{\boldmath$\scriptstyle#1$\unboldmath}}}%
{{\hbox{\boldmath$\scriptscriptstyle#1$\unboldmath}}}%
}
\def\vec#1{\bldmth{#1}}
\begin{document}

\markboth{S.~E.~Korenblit and D.~V.~Taychenachev}
{Extension of Grimus-Stockinger formula from 
 operator expansion of free Green function}

\catchline{}{}{}{}{}

\title{EXTENSION OF GRIMUS-STOCKINGER FORMULA FROM \\
 OPERATOR EXPANSION OF FREE GREEN FUNCTION}

\author{\footnotesize S.~E.~Korenblit and D.~V.~Taychenachev}

\address{
Department of Physics, Irkutsk State University, \\
Gagarin blvd 20, Irkutsk 664003, Russia \\
korenb@ic.isu.ru
}

\maketitle

\pub{Received (Day Month Year)}{Revised (Day Month Year)}

\begin{abstract}
\indent The operator expansion of free Green function of Helmholtz  
equation for arbitrary $N$- dimension space leads to asymptotic extension of 
3- dimension Grimus-Stockinger formula closely related to multipole expansion. 
Analytical examples inspired by neutrino oscillation and neutrino deficit problems 
are considered for relevant class of wave packets.

\keywords{asymptotic expansion; Green function; wave packet; neutrino deficit.}
\end{abstract}

\ccode{PACS Nos.: 03.65.Db, 11.80.Fv, 14.60.Pq}

\section{Introduction}

A considerable efforts was made recently\cite{k_t,N_shk,NN_shk} to extend 
so called Grimus-Stockinger theorem,\cite{Grimus} which is the main tool of 
the modern theory of neutrino oscillations\cite{Beuthe,NN,ahm} and gives the 
leading asymptotic behavior with $R=|\vec{\rm R}| \rightarrow \infty$ for the integral:
\begin{equation}
{\cal J}(\vec{\rm R}) = \int \frac{d^3 q}{(2\pi)^3}
\frac{e^{-i(\q\cdot\vec{\rm R})}\Phi(\q)}{(\q^2 - k^2 - i0)}\sim
\frac{e^{i{k}R}}{4\pi R}\Phi\left(-k\n\right)\left[1+ O(R^{-1/2})\right], 
\label{1}
\end{equation}
where: $\vec{\rm R} = R\n$, $\n^2=1$, and the function $\Phi(\q) \in C^3$ 
decreases at least like $1/\q^2$ together with its first and second derivatives. 
In the next section of the present letter it is reminded that the possibility of further 
asymptotic expansion and the order of leading correction depend\cite{olv} on 
the chosen properties\cite{k_t,N_shk} of this function. 
In section 3 for appropriate space of $\Phi(\q)$ a closed formula and simple recurrent 
relation for coefficients of asymptotic expansion of ${\cal J}(\vec{\rm R})$ in all 
orders of $R^{-s}$ are obtained. 
Such extension has particular importance for explanation \cite{N_shk,NN_shk} of 
observed \cite{anom} (anti-) neutrino deficit at short distances from the sources 
discussed in section 4. 
But it may find much more wide implementation in quantum physics and optics, when the 
Green function of Helmholtz equation is used. 
In sections 4, 5 the space chosen for the functions $\Phi(\q)$ is advocated 
on the ground of quantum field theory of wave packets.\cite{NN,k_t2} 
Some useful relations and generalization onto $N$- dimension case are placed in 
Appendix.

\section{Preliminaries}

In order to understand the physical nature of asymptotic expansion, we notice that for 
infinitely differentiable $\Phi(\q)$ uniquely representable by its Taylor expansion 
for any finite $|\q|<\infty$: $e^{-i(\q\cdot\vec{\rm R})}\Phi(\q)=
\Phi\left(i\vec{\nabla}_{\vec{\rm R}}\right)e^{-i(\q\cdot\vec{\rm R})}$, and then, 
formally: 
\begin{equation}
{\cal J}(\vec{\rm R})=
\Phi\left(i\vec{\nabla}_{\vec{\rm R}}\right)\frac{e^{ik R}}{4\pi R}=
\Phi\left(-i\vec{\nabla}_{\xv}\right)
\frac{e^{ik|\vec{\rm R}-\xv|}}{4\pi|\vec{\rm R}-\xv|}\biggr|_{\xv=0},
\label{2}
\end{equation}
where the differential vector operator in spherical basis 
$\n,\vec{\eta}_\vartheta,\vec{\eta}_\varphi$ has the following properties: 
\begin{eqnarray}
&&\!\!\!\!\!\!\!\!\!\!\!\!\!\!\!\!\!\!\!\!\!\!\!\! 
\n=(\sin\vartheta\cos\varphi,\sin\vartheta\sin\varphi,\cos\vartheta),\quad 
\vec{\eta}_\vartheta=\partial_\vartheta\n,\;\;\;
\sin\vartheta \vec{\eta}_\varphi=\partial_\varphi\n, 
\label{4} \\
&&\!\!\!\!\!\!\!\!\!\!\!\!\!\!\!\!\!\!\!\!\!\!\!\! 
\vec{\nabla}_{\vec{\rm R}}=\n\partial_R+\frac 1R \vec{\partial}_{\n}, \quad 
(\n\cdot\vec{\nabla}_{\vec{\rm R}})=\partial_R, \quad 
\vec{\partial}_{\n}=\vec{\eta}_\vartheta\partial_\vartheta+
\frac{\vec{\eta}_\varphi}{\sin\vartheta}\partial_\varphi, 
\label{3} \\
&&\!\!\!\!\!\!\!\!\!\!\!\!\!\!\!\!\!\!\!\!\!\!\!\! 
(\n\cdot\vec{\partial}_{\n})=0,\quad 
(\vec{\partial}_{\n}\cdot\n)=2,\quad 
(\n\times\vec{\partial}_{\n})^2=\vec{\partial}^2_{\n},\quad 
(\n\times\vec{\partial}_{\n})=i\vec{L}_{\n},
\label{5} \\
&&\!\!\!\!\!\!\!\!\!\!\!\!\!\!\!\!\!\!\!\!\!\!\!\! 
-\vec{\partial}^2_{\n}=\vec{L}^2_{\n}=
2R(\n\cdot\vec{\nabla}_{\vec{\rm R}})+R^2\left(
(\n\cdot\vec{\nabla}_{\vec{\rm R}})^2-\vec{\nabla}^2_{\vec{\rm R}}\right),\;
\mbox{ whence,}
\label{6} \\
&&\!\!\!\!\!\!\!\!\!\!\!\!\!\!\!\!\!\!\!\!\!\!\!\! 
\mbox{for }\;\cos\vartheta=c:\;\;
\vec{L}^2_{\n}=
-\left[\partial_c(1-c^2)\partial_c+(1-c^2)^{-1}\partial^2_\varphi\right]
\equiv {\cal L}_{\n},
\label{7}
\end{eqnarray}
and the well known representation at point $\vec{\rm R}$ for the spherical wave coming 
from point $\xv$, as a free Schr\"odinger's 3-dimensional Green function,\cite{T} is 
used (see (\ref{A_4})): 
\begin{equation}
\frac{e^{ik|\vec{\rm R}-\xv|} }{4\pi|\vec{\rm R}-\xv|}=\!
\int\!\frac{d^3{\rm q}}{(2\pi)^3}\frac{e^{\pm i(\q\cdot(\vec{\rm R}-\xv))}}
{(\q^2-k^2-i0)},
\label{8}
\end{equation}
which for $\xv=0$ satisfys to the equation: 
\begin{equation}
\left(-\vec{\nabla}^2_{\vec{\rm R}}-k^2\right)
\frac{e^{\pm ik R}}{4\pi R}=\delta_3(\vec{\rm R}), 
\label{9}
\end{equation}
where for $R>0$ $\vec{\nabla}^2_{\vec{\rm R}}$ is given also by (\ref{A_2}) with $N=3$.  
Since for $R>0$ the right hand side of this equation is zero, $\delta_3(\vec{\rm R})=0$, 
for spherically symmetric case one immediately obtains from (\ref{2}) formally an exact 
answer, which really takes place at least for the function $\Psi(k^2)$ regular and 
bounded in upper half plane ${\rm Im}\,k\geq 0$ of complex variable $k$:  
\begin{equation}
\mbox{when }\;\Phi(\q) = \Psi(q^2),\;\mbox{ for }\;\q^2=q^2, \;\mbox{ then: }\;
{\cal J}(\vec{\rm R}) = \Psi(k^2) e^{ikR}/(4\pi R). 
\label{10}
\end{equation}
Therefore the higher order corrections originate only by asymmetry of function 
$\Phi(\q)$ relative to the directions of $\q$ in accordance with our previous 
result.\cite{k_t} 
Indeed, in order to obtain them, we supposed\cite{k_t} that $\Phi(\q)$ and 
its first and second derivatives are represented by Fourier-transforms as: 
\begin{equation}
\Phi(\q)=\!\int\! d^3{\rm x}\,e^{i(\q\cdot\xv)}\,\phi(\xv), \quad 
\vec{\nabla}_q\Phi(\q)=i\!\int\! d^3{\rm x}\,e^{i(\q\cdot\xv)}\,\xv\,\phi(\xv), 
\mbox{ and so on.}
\label{11}
\end{equation}
Then by interchanging the order of integration and using the Eq. (\ref{8}) we found 
the representation: 
\begin{equation}
{\cal J}(\vec{\rm R})=\!
\int\! d^3{\rm x}\,\frac{e^{ik|\vec{\rm R}-\xv|}}{4\pi|\vec{\rm R}-\xv|}\,\phi(\xv). 
\label{12}
\end{equation}
Substituting here the expansion, which in the exponential should always contain one 
additional order in compare with the number of orders in denominator, for:  
\begin{eqnarray}
&&\!\!\!\!\!\!\!\!\!\!\!\!\!\!\!\!\!
|\vec{\rm R}-\xv|=R\left[1-2\frac{(\n\cdot\xv)}{R}+\frac{\xv^2}{R^2}\right]^{1/2},\; 
\;\mbox{ whih: }\; \xv^2-(\n\cdot\xv)^2 \leftrightharpoons (\n\times\xv)^2,
\label{13_0} \\
&&\!\!\!\!\!\!\!\!\!\!\!\!\!\!\!\!\!
|\vec{\rm R}-\xv|=R-(\n\cdot\xv)+\frac{\xv^2-(\n\cdot\xv)^2}{2R}+\ldots,
\label{13}
\end{eqnarray}
we had the corresponding expansion of integral (\ref{12}) up to $O(R^{-3})$:
\[
{\cal J}(\vec{\rm R})=\frac{e^{ik R}}{4\pi R}
\int d^3{\rm x}\,e^{-ik(\n\cdot\xv)}\phi(\xv)
\left[1+\frac{(\n\cdot\xv)}{R}+\frac{ik}{2R}\left(\xv^2-(\n\cdot\xv)^2\!\right)+\ldots 
\right],
\]
that by making use of (\ref{11}) was immediately transcribed as:
\begin{eqnarray}
&&\!\!\!\!\!\!\!\!\!\!\!\!\!\!\!\!\!
{\cal J}(\vec{\rm R})=\frac{e^{ik R}}{4\pi R}
\left[1-\frac i{R} (\n\cdot\vec{\nabla}_q)+
\frac{ik}{2R}\left((\n\cdot\vec{\nabla}_q)^2-\vec{\nabla}^2_q\right)+\ldots\right]
\Phi(\q)\biggr|_{\q=-k\n},
\label{14}  \\
&&\!\!\!\!\!\!\!\!\!\!\!\!\!\!\!\!\!
\mbox{with: }\;(\n\cdot\vec{\nabla}_q)\Phi(\q)\bigr|_{\q=-k\n}=
-(\n\cdot\vec{\nabla}_\ka)\Phi(-\ka)=-\partial_k\Phi(-k\n), \;\mbox{ and so on.}
\label{15} 
\end{eqnarray}
The next observation is that due to (\ref{15}), with $\ka=k\n$, for $\q = -\ka$, 
$\vec{\nabla}_{\q}= -\vec{\nabla}_{\ka}$, from (\ref{6}), (\ref{7}) with 
$\vec{\rm R}\mapsto\ka$, one has:   
$2k(\n\cdot\vec{\nabla}_{\ka})+k^2\left((\n\cdot\vec{\nabla}_{\ka})^2-
\vec{\nabla}^2_{\ka}\right)={\cal L}_{\n}$, and expression (\ref{14}) takes the following 
simple form: 
\begin{eqnarray}
&&\!\!\!\!\!\!\!\!\!\!\!\!\!\!\!\!\!
{\cal J}(\vec{\rm R})=\frac{e^{ik R}}{4\pi R}
\left[1+\frac i{R} (\n\cdot\vec{\nabla}_{\ka})+\frac{i k}{2R}
\left((\n\cdot\vec{\nabla}_{\ka})^2-\vec{\nabla}^2_{\ka}\right)+\ldots\right]\Phi(-\ka)=
\label{16}\\
&&\!\!\!\!\!\!\!\!\!\!\!\!\!\!\!\!\!
=\frac{e^{ik R}}{4\pi R}\left[1+\frac i{2kR}{\cal L}_{\n}+O(R^{-2})\right]\Phi(-k\n).
\label{17}
\end{eqnarray}
In the next section by using the another surprisingly simple way, this result is 
generalized onto all orders of $R^{-s}$. 
It should be noted, that in spite of the one and the same final results of the both 
Eqs. (\ref{14}) and (\ref{16}), (\ref{17}), the substitution like (\ref{13_0}): 
$\vec{\nabla}^2_q-(\n\cdot\vec{\nabla}_q)^2\leftrightharpoons (\n\times\vec{\nabla}_q)^2$  
is correct till $\q\neq -k\n$ for expressions\cite{k_t,N_shk,NN_shk} like (\ref{14}), but it 
becomes incorrect for non-commutative operators $\n$ and $\vec{\nabla}_{\ka}$ in 
expressions like (\ref{16}). This difference from Ref. 1-3 
 below eventually gives rise an explicit operator expression for higher coefficients of 
this asymptotic expansion with arbitrary $s$.
\section{Asymptotic and multipole expansions}

\begin {lemma} 
For $\vec{\rm R}=R\n$, $\xv=r\vs$, 
$\vs=(\sin\beta\cos\alpha, \sin\beta \sin\alpha,\cos\beta)$, 
$|\xv|=r<R$, with operator ${\cal L}_{\n}=\vec{L}^2_{\n}$ 
(or $\n\mapsto\vs$) defined by Eqs. (\ref{5})--(\ref{7}) and positively defined operator 
${\cal L}_{\n}+\frac 14=(\Lambda_{\n}+\frac 12)^2$, so that 
$\Lambda_{\n}+\frac 12=\sqrt{{\cal L}_{\n}+\frac 14}$ is positively defined: 
\begin{eqnarray}
&&\!\!\!\!\!\!\!\!\!\!\!\!\!\!\!\!\!\!\!\!\!\!\!\! 
\frac{e^{ik|\vec{\rm R}-\xv|}}{4\pi|\vec{\rm R}-\xv|}=
\frac{\chi_{\Lambda_{\n}}(-ikR)}{4\pi R}e^{-ik(\n\cdot\xv)}\sim
\label{18_0} \\
&&\!\!\!\!\!\!\!\!\!\!\!\!\!\!\!\!\!\!\!\!\!\!\!\! 
\sim\frac{e^{ikR}}{4\pi R} \left\{1+\sum^\infty_{s=1}\frac{ \displaystyle 
\prod\limits^s_{\mu=1}\left[{\cal L}_{\n}-\mu(\mu-1)\right]}
{s!(-2ikR)^s}\right\} e^{-ik(\n\cdot\xv)}. 
\label{18} 
\end{eqnarray}
\end {lemma}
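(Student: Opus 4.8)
The plan is to diagonalise both sides of (\ref{18_0}) in the complete angular basis $\{P_\ell(\n\cdot\vs)\}_{\ell\ge0}$, which by (\ref{7}) and the addition theorem are exactly the eigenfunctions of the angular operator, ${\cal L}_{\n}P_\ell(\n\cdot\vs)=\ell(\ell+1)P_\ell(\n\cdot\vs)$, and hence, through ${\cal L}_{\n}+\frac14=(\Lambda_{\n}+\frac12)^2$, also $\Lambda_{\n}P_\ell(\n\cdot\vs)=\ell\,P_\ell(\n\cdot\vs)$. First I would write the left-hand side as the Gegenbauer (multipole) expansion of the free Green function, valid for $r=|\xv|<R$,
\[
\frac{e^{ik|\vec{\rm R}-\xv|}}{4\pi|\vec{\rm R}-\xv|}=\frac{ik}{4\pi}\sum_{\ell=0}^{\infty}(2\ell+1)\,j_\ell(kr)\,h^{(1)}_\ell(kR)\,P_\ell(\n\cdot\vs),
\]
and the plane wave on the right as the Rayleigh expansion $e^{-ik(\n\cdot\xv)}=\sum_{\ell}(2\ell+1)(-i)^\ell j_\ell(kr)P_\ell(\n\cdot\vs)$.

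Comparing the two series term by term in $\ell$ then forces $\chi_{\Lambda_{\n}}(-ikR)$ to act on $P_\ell(\n\cdot\vs)$ as multiplication by $\chi_\ell(-ikR)=i^{\ell+1}kR\,h^{(1)}_\ell(kR)$, i.e. the Riccati--Hankel function up to the phase $i^{\ell+1}$. Since $\chi$ depends only on the operator $\Lambda_{\n}$, this single identification fixes its action on every eigenspace, and by linearity and completeness it establishes the exact operator identity (\ref{18_0}). Here I would emphasise that $\chi_{\Lambda_{\n}}$ is to be read \emph{spectrally}, through its action on the $\n$-harmonics, rather than by the naive commuting substitution warned against after (\ref{17}).

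For the expansion (\ref{18}) I would insert the terminating series $h^{(1)}_\ell(z)=(-i)^{\ell+1}(e^{iz}/z)\sum_{s=0}^{\ell}\frac{(\ell+s)!}{s!\,(\ell-s)!\,(-2iz)^s}$, giving on $P_\ell(\n\cdot\vs)$ the exact polynomial $\chi_\ell(-ikR)=e^{ikR}\sum_{s=0}^{\ell}\frac{(\ell+s)!}{s!\,(\ell-s)!\,(-2ikR)^s}$, with leading term $1$ reproducing the Grimus--Stockinger behaviour. The only algebra is the factorisation ${\cal L}_{\n}-\mu(\mu-1)\mapsto\ell(\ell+1)-\mu(\mu-1)=(\ell+\mu)(\ell-\mu+1)$, whose telescoping product yields $\prod_{\mu=1}^{s}[\ell(\ell+1)-\mu(\mu-1)]=(\ell+s)!/(\ell-s)!$; this matches the numerator in (\ref{18}) and simultaneously shows the product annihilates $P_\ell$ for $s>\ell$, so the infinite operator series truncates on each eigenspace and returns the finite Hankel polynomial exactly. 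Equivalently, and more in the spirit of the letter, I would substitute the ansatz $\chi_{\Lambda_{\n}}=e^{ikR}\sum_s a_s/(-2ikR)^s$ into the radial Riccati--Bessel equation $\chi''+(k^2-{\cal L}_{\n}/R^2)\chi=0$ obeyed by $kR\,h^{(1)}_\ell$ with $\ell(\ell+1)\mapsto{\cal L}_{\n}$, and read off the recurrence $s\,a_s=[{\cal L}_{\n}-s(s-1)]a_{s-1}$, $a_0=1$, which integrates at once to the stated product.

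The hard part will be analytic rather than algebraic: justifying the termwise comparison of the two infinite $\ell$-sums and the legitimacy of applying $\chi_{\Lambda_{\n}}$ order by order, which is what confines the exact statement to $r<R$ and explains the $\sim$ in (\ref{18}). Because each eigenspace truncates at $s=\ell$ while a generic $\xv$ excites all $\ell$, the operator series (\ref{18}) is genuinely only an asymptotic series in $R^{-1}$ once it acts on functions of $\n$ with unbounded angular content; making this rigorous demands a uniform remainder estimate for the tail beyond $s$ terms, whose growth is governed by powers of ${\cal L}_{\n}$ --- precisely the mechanism that dictates the smoothness and decrease conditions later imposed on $\Phi$ (equivalently on $\phi$) so that the reassembled integral (\ref{12}) inherits a controlled expansion.
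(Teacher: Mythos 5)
Your proposal is correct and follows essentially the same route as the paper: the multipole expansion of the Green function together with the Rayleigh expansion of the plane wave, a spectral (eigenspace-by-eigenspace) identification of $\chi_{\Lambda_{\n}}(-ikR)$ with $\chi_\ell(-ikR)$ on each $P_\ell(\n\cdot\vs)$, the terminating Hankel polynomial, and the factorization $\prod_{\mu=1}^{s}[\ell(\ell+1)-\mu(\mu-1)]=(\ell+s)!/(\ell-s)!$ whose vanishing for $s>\ell$ lets the $\ell$-sum be completed back into the plane wave. Your alternative derivation of the coefficients from the radial Riccati--Bessel equation with $\ell(\ell+1)\mapsto{\cal L}_{\n}$ reproduces exactly the Remark appended to the paper's proof, and your closing caveat about the merely asymptotic status of the reordered series matches the paper's appeal to the asymptotic version of the finite sum for non-integer order.
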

\begin {proof} 
The expression (\ref{18_0}) is a formal operator rewrite for $R>r$ of 
the usual multipole expansion of the Green function\cite{T} (\ref{8}) via the 
corresponding expansion of the plane wave,\cite{T} given also by formulas  
(8.533), (8.534) of Ref. 12, 
 with the formally introduced instead $l$, but not 
really appeared operator $l\mapsto\Lambda_{\n}$: 
\begin{eqnarray}
&&\!\!\!\!\!\!\!\!\!\!\!\!\!\!\!\!\!\!\!\!\!\!\!\! 
\frac{e^{\pm ik|\vec{\rm R}-\xv|}}{4\pi|\vec{\rm R}-\xv|}=
\frac{1}{kRr}\sum^\infty_{l=0}i^{\mp l}\chi_l(\mp ikR)\,\psi_{l\,0}(kr)
\sum^l_{m=-l}Y^m_l(\n)\overset{*}{Y}\!{}^m_l(\vs),
\label{19} \\
&&\!\!\!\!\!\!\!\!\!\!\!\!\!\!\!\!\!\!\!\!\!\!\!\! 
e^{\mp i(\ka\cdot\xv)}=
\frac{4\pi}{kr}\sum^\infty_{l=0}i^{\mp l}\,\psi_{l\,0}(kr)
\sum^l_{m=-l}Y^m_l(\n)\overset{*}{Y}\!{}^m_l(\vs).
\label{20} 
\end{eqnarray}
Here the spherical functions $Y^m_l(\n)=\langle\n|l\,m\rangle$ and Legendre 
polynomials $P_l(\xi)$ for $\xi=c=\cos\vartheta$ or $\xi=(\n\cdot\vs)$,
as eigenfunctions of self-adjoint operator (\ref{6}), (\ref{7}) on the unit sphere,  
satisfy to well known orthogonality, parity and completeness 
conditions\cite{T,b_e2,vil,b_l} with 
delta-function $\delta_\Omega(\n\,,\vs)$ on the unit sphere: 
\begin{eqnarray}
&&\!\!\!\!\!\!\!\!\!\!\!\!\!\!\!\!\!\!\!\!\!\!\!\! 
\vec{L}^2_{\n}Y^m_l(\n)=l(l+1)Y^m_l(\n), \quad \quad
\vec{L}^2_{\n}P_l(\xi)=l(l+1)P_l(\xi),
\label{21} \\
&&\!\!\!\!\!\!\!\!\!\!\!\!\!\!\!\!\!\!\!\!\!\!\!\! 
\int d\Omega(\n) \overset{*}{Y}\!{}^m_l(\n)Y^{m^\prime}_j(\n)=
\delta_{lj}\delta_{mm^\prime}, \quad 
\sum^\infty_{l=0}\frac{(2l+1)}{4\pi}P_l\left((\n\cdot\vs)\right)=
\delta_\Omega(\n\,,\vs), 
\label{22_0} \\
&&\!\!\!\!\!\!\!\!\!\!\!\!\!\!\!\!\!\!\!\!\!\!\!\!  
\sum^l_{m=-l}Y^m_l(\n)\overset{*}{Y}\!{}^m_l(\vs)\equiv 
\frac{(2l+1)}{4\pi} P_l\left((\n\cdot\vs)\right), \quad \; (-1)^lY^m_l(\n)=Y^m_l(-\n). 
\label{22} 
\end{eqnarray}
The solutions $\chi_l(\mp ikr)$, $\psi_{l\,0}(kr)$ of free radial Schr\"oedinger   
equation:  
\begin{equation}
\left[r^2\left(\frac 1r\,\partial^2_r\,r+k^2\right)\right]\frac{\psi_{l\,0}(kr)}{r}=
l(l+1)\frac{\psi_{l\,0}(kr)}{r},  
\label{26} 
\end {equation}
are defined by Macdonald $K_\lambda(z)$ and Bessel $J_\lambda(y)$ 
functions,\cite{olv,T,b_e2,vil,b_l} that for integer $l$ i.e. half integer 
$\lambda=l+\frac 12$ are reduced to elementary functions:\cite{olv,b_e2}  
\begin{eqnarray}
&&\!\!\!\!\!\!\!\!\!\!\!\!\! 
\chi_l(b R)\equiv \left(\frac{2bR}{\pi}\right)^{1/2}\!\!
K_{l+{\frac 12}}(bR), \quad 
\chi_l(b R) \biggr|_{l=\mathrm{int}}\!\!\! =
e^{-b R}\sum^l_{s=0}\frac{(l+s)!}{s!(l-s)!(2b R)^s},
\label{24} \\
&&\!\!\!\!\!\!\!\!\!\!\!\!\! 
\psi_{l\,0}(kr) \equiv \left(\frac{\pi kr}{2}\right)^{1/2}\!\!\!J_{l+{\frac 12}}(kr)
\equiv \frac{1}{2i}\left[i^{-l}\chi_l(-ikr)-i^l\chi_l(ikr)\right].
\label{24_1} 
\end{eqnarray}
The function $K_\lambda(z)$ (\ref{A_1}) is a whole function\cite{olv,b_e2} of 
$\lambda^2$, that is the reason, why the above introduced in (\ref{18_0}) well 
defined operator $\Lambda_{\n}$ does not appear explicitly. 

The expression (\ref{18}) is the known asymptotic series of expression (\ref{18_0}) at 
$R\to\infty$, as an infinite asymptotic version\cite{olv,b_e2} of the sum (\ref{24}) for 
arbitrary non-integer $l$, $|{\rm arg}(bR)|<3\pi/2$. 
It directly results  also from the substitution of finite sums (\ref{22}), (\ref{24}) by 
interchanging the order of summations, converted the Eq. (\ref{19}) into the sum:
\begin{eqnarray}
&&\!\!\!\!\!\!\!\!\!\!\!\!\!\!\!\!\!\!\!\!\!\!\!\! 
\frac{e^{ikR}}{4\pi R}\sum^\infty_{s=0}\frac {1}{s!(-2ikR)^s}\,\frac{1}{kr} 
\sum^\infty_{l=s}i^{-l} \frac{(l+s)!}{(l-s)!}\,\psi_{l\,0}(kr)
(2l+1) P_l\left((\n\cdot\vs)\right), 
\label{27} \\
&&\!\!\!\!\!\!\!\!\!\!\!\!\!\!\!\!\!\!\!\!\!\!\!\! 
\mbox{where : }\;
\frac{(l+s)!}{(l-s)!}=\prod\limits^s_{\mu=1}(l-\mu+1)(l+\mu)=
\prod\limits^s_{\mu=1}\left[l(l+1)-\mu(\mu-1)\right],
\label{28}
\end{eqnarray}
equals to zero for all missing summands with $0\leq l\leq s-1$ automatically and due to  
Eqs. (\ref{21}) may be factored out from the sum over $l$ as operator product in the 
right hand side of Eq. (\ref{18}). The formal addition of all such missing in fact zero 
summands with $0\leq l\leq s-1$ completes then sum over $l$ into the plane wave (\ref{20}) 
and converts the expression (\ref{27}) into the expansion (\ref{18}).  \\ 
\noindent
{\sl Remark.} The operator ${\cal L}_{\n}$ in Eq. (\ref{18}) with the 
same success may be replaced by operator in square brackets of the left hand side of 
Eq. (\ref{26}) or by the same with interchanging $r\rightleftharpoons k$.  
\end {proof} 
\begin {theorem}
Let $\Phi(\q)\in S(\vec{\rm R}^3_\q)$, the space of functions infinitely 
differentiable $\forall\, \q\in\vec{\rm R}^3_\q$, decrease faster than any power of 
$1/|\q|$ together with all its derivatives. Then integral ${\cal J}(\vec{\rm R})$ 
(\ref{1}), (\ref{12}) at $R\to\infty$ admits asymptotic expansion, which has asymptotic 
sense\cite{olv} even though $\phi(\xv)$ in Eq. (\ref{11}) has a finite support: 
\begin{eqnarray}
&&\!\!\!\!\!\!\!\!\!\!\!\!\!\!\!\!\!\!\!\!\!\!\!\! 
{\cal J}(\vec{\rm R})\sim \frac{e^{ikR}}{4\pi R} \Phi(-k\n) \left\{
1+\sum^\infty_{s=1}\frac{C_s(k,\n)}{(-2ikR)^s}\right\},\; \mbox{ with:}
\label{29} \\
&&\!\!\!\!\!\!\!\!\!\!\!\!\!\!\!\!\!\!\!\!\!\!\!\! 
\Phi(-k\n) C_s(k,\n)=
\frac{1}{s!}\prod\limits^s_{\mu=1}\left[{\cal L}_{\n}-\mu(\mu-1)\right]\Phi(-k\n), 
\;\mbox{ or:}
\label{30} \\
&&\!\!\!\!\!\!\!\!\!\!\!\!\!\!\!\!\!\!\!\!\!\!\!\! 
\Phi(-k\n) C_s(k,\n)=\frac{{\cal L}_{\n}-s(s-1)}{s}\Phi(-k\n)C_{s-1}(k,\n),
\quad   C_0(k,\n)=1,
\label{30_0}
\end{eqnarray}
and which is equivalent to infinite reordering of its multipole expansion:
\begin{eqnarray}
&&\!\!\!\!\!\!\!\!\!\!\!\!\!\!\!\!\!\!\!\!\!\!\!\! 
{\cal J}(\vec{\rm R})\sim \frac{1}{4\pi R}
\sum^\infty_{j=0}\chi_j(-ikR)\sum^j_{m=-j}B^m_j(k) Y^m_j(\n), 
\label{31} \\
&&\!\!\!\!\!\!\!\!\!\!\!\!\!\!\!\!\!\!\!\!\!\!\!\! 
\mbox{with: }\;
\Phi(-k\n) C_s(k,\n)=
\frac{1}{s!}\sum^\infty_{j=s}\frac{(j+s)!}{(j-s)!}\sum^j_{m=-j}\!\!B^m_j(k) Y^m_j(\n), 
\label{32} \\
&&\!\!\!\!\!\!\!\!\!\!\!\!\!\!\!\!\!\!\!\!\!\!\!\! 
\mbox{for: }\;
\Phi(-k\n)=\sum^\infty_{j=0}\sum^j_{m=-j}B^m_j(k) Y^m_j(\n), \quad 
\phi(\xv)=\!\int\!\frac{d^3{\rm q}}{(2\pi)^3}\Phi(\q)\,e^{-i(\q\cdot\xv)}.
\label{32_1}
\end{eqnarray}
\end {theorem}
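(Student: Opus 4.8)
The plan is to feed the operator expansion (\ref{18}) of the Lemma directly into the integral representation (\ref{12}) of ${\cal J}(\vec{\rm R})$. Since the operator ${\cal L}_{\n}$ appearing in (\ref{18}) differentiates only in the angular variables of the direction $\n$, it is inert under $\int d^3{\rm x}$, so I would interchange the angular operator with the spatial integration and then read off, from the first of the Fourier pairs (\ref{11}), that $\int d^3{\rm x}\, e^{-ik(\n\cdot\xv)}\phi(\xv)=\Phi(-k\n)$. Term by term this turns the bracket in (\ref{18}) into the bracket in (\ref{29}) with $\Phi(-k\n)\,C_s(k,\n)=\frac{1}{s!}\prod_{\mu=1}^s[{\cal L}_{\n}-\mu(\mu-1)]\,\Phi(-k\n)$, which is exactly (\ref{30}); peeling the last factor $[{\cal L}_{\n}-s(s-1)]$ off the product gives the recurrence (\ref{30_0}) with $C_0=1$ at once. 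The interchange of ${\cal L}_{\n}$ with the integral is licit because the rapid decay (or compact support) of $\phi$ permits differentiation under the integral sign.

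For the multipole form I would expand $\Phi(-k\n)$ in the spherical-harmonic basis (\ref{32_1}) and let the operator act through its eigenvalues. By (\ref{21}) each $Y^m_j(\n)$ is an eigenfunction of ${\cal L}_{\n}=\vec{L}^2_{\n}$ with eigenvalue $j(j+1)$, so the identity (\ref{28}) converts $\prod_{\mu=1}^s[{\cal L}_{\n}-\mu(\mu-1)]$ acting on $Y^m_j$ into the factor $(j+s)!/(j-s)!$, which vanishes for $0\leq j\leq s-1$ and hence trims the sum to $j\geq s$; this is precisely (\ref{32}). Substituting (\ref{32}) back into (\ref{29}) and reversing the order of the $s$- and $j$-summations, the inner sum over $s$ collects, for each fixed integer $j$, into the finite sum (\ref{24}), i.e. $e^{-ikR}\chi_j(-ikR)$, and (\ref{31}) drops out. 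This is just the Lemma's passage from (\ref{27}) to (\ref{18}) run with the coefficients $B^m_j(k)$ of $\Phi$ in place of the plane-wave coefficients.

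What makes the two reorderings legitimate is the smoothness demanded by $\Phi\in S(\vec{\rm R}^3_\q)$: the restriction $\Phi(-k\n)$ is then a $C^\infty$ function on the unit sphere, whence its coefficients $B^m_j(k)$ decay faster than any power of $j$. This super-polynomial decay dominates the polynomial growth $(j+s)!/(j-s)!\sim j^{2s}$ in (\ref{32}), so each $C_s(k,\n)$ is finite and the collected $1/R$-series is well defined; the same decay justifies interchanging $\sum_s$ with $\sum_j$ in passing between (\ref{29}) and (\ref{31}).

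The real work, and the point the theorem flags, is showing that (\ref{29}) retains genuine \emph{asymptotic} meaning --- that truncating after $s$ terms leaves a remainder $O(R^{-s-1})$ --- even when $\phi(\xv)$ has finite support rather than the rapid decay one gets generically for a Schwartz $\Phi$. Here I would exploit that the multipole series (\ref{19}) converges for every $R>r$, so once $R$ exceeds the support radius of $\phi$ the expansion (\ref{18}) is valid \emph{uniformly} in $\xv$ over the support, and the remainder of the Macdonald-function asymptotics underlying $\chi_l$ through its definition (\ref{24}) is uniformly bounded there. Integrating this uniform remainder against $\phi$ then yields the stated order, invoking Olver's theory for the $R\to\infty$ behavior of $\chi_l(-ikR)$. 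The delicate step is precisely that the error control of the $R\to\infty$ expansion survives being combined with the infinite multipole sum; this is the obstacle, and it is met by the uniform bound over the compact support together with the rapid decay of $B^m_j(k)$ established above.
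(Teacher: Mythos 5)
Your route for the core of the theorem coincides with the paper's: substitute the Lemma's operator expansion (\ref{18}) into the representation (\ref{12}), pull ${\cal L}_{\n}$ through the $\xv$-integration, and use (\ref{11}) to arrive at (\ref{33}), i.e.\ (\ref{29})--(\ref{30_0}); likewise your derivation of the multipole form, by letting the operator product act on the expansion (\ref{32_1}) through the eigenvalues $j(j+1)$ of (\ref{21}) and the identity (\ref{28}) to get (\ref{32}), and then resumming over $s$ at fixed $j$ into the finite sum (\ref{24}) to recover (\ref{31}), is one of the two equivalent derivations the paper gives. Your remark that the $C^\infty$ regularity of $\Phi(-k\n)$ on the sphere forces super-polynomial decay of the $B^m_j(k)$, which is what legitimizes the $s\leftrightarrow j$ reordering, is correct and is left implicit in the paper.

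The gap is in your last paragraph, where you have inverted which case is delicate. Compact support of $\phi$ is the \emph{easy} case: once $R$ exceeds the support radius $r_0$, the Lemma applies uniformly over the whole support and the first equality in (\ref{33}) is exact, the asymptotic character being inherited directly from (\ref{18}). The case the theorem must still cover --- and the generic one for $\Phi\in S(\vec{\rm R}^3_\q)$, since $\phi$ is compactly supported only for the special Paley--Wiener subclass of $\Phi$, certainly not for the Gaussian-type overlap functions of Sec.~4 --- is the one you never address: the expansion (\ref{18_0}), (\ref{18}) is valid only for $|\xv|=r<R$, so for non-compact $\phi$ its substitution into (\ref{12}) is illegitimate on the region $r>R$ no matter how large $R$ is. The paper closes this by splitting off that region and estimating the two resulting corrections, the omitted piece $\Delta_R{\cal J}$ of the true integral and the spuriously included piece $\Delta_R\Phi$ of the operator formula, Eq.~(\ref{34}): using $|\phi(\xv)|<{\rm C}_{\rm M}/r^{\rm M}$ for every ${\rm M}$ (which is where $\phi\in S(\vec{\rm R}^3_\xv)$ is really used), both are bounded by ${\rm const}\cdot R^{-({\rm M}-2)}$ as in (\ref{34_1}), hence are smaller than every power of $1/R$ and do not disturb the asymptotic expansion. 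Without this step your argument establishes the theorem only for compactly supported $\phi$, not for the full class $S(\vec{\rm R}^3_\q)$ in which the theorem is stated and applied.
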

\begin {proof} 
Since Fourier transformation (\ref{11}) maps the space $S(\vec{\rm R}^3)$ 
into itself\cite{blt} the function $\phi(\xv)\in S(\vec{\rm R}^3_\xv)$ also and is 
represented by the inverse Fourier transform (\ref{32_1}). Let suppose at first that 
$\phi(\xv)$ has a finite support at $|\xv|\leq r_0$. Then for $R>r_0$ we can directly 
substitute the expressions (\ref{18}) into representation (\ref{12}) of 
${\cal J}(\vec{\rm R})$ with the following result after interchange of the order of 
summation, differentiation and integration of Fourier transform (\ref{11}), 
justified\cite{olv} also for the asymptotic series: 
\begin{equation}
{\cal J}(\vec{\rm R})=
\frac{{\chi}_{\Lambda_{\n}}(-ikR)}{4\pi R}\,\Phi(-k\n)\sim 
\frac{e^{ikR}}{4\pi R}\left\{1+\sum^\infty_{s=1}\frac{ \displaystyle 
\prod\limits^s_{\mu=1}\left[{\cal L}_{\n}-\mu(\mu-1)\right]}
{s!(-2ikR)^s}\right\} \Phi(-k\n).
\label{33}
\end{equation}
This is exactly asymptotic expansion (\ref{29}) with dimensionless coefficients 
$C_s(k,\n)$, defined by Eq. (\ref{30}). However that is not the case for the function 
$\phi(\xv)$ with infinite support. 
Estimating it for $r>R$ as $|\phi(\xv)|<{\rm C}_{\rm M}/r^{\rm M}$ with arbitrary finite 
${\rm M}\gg 1$, the two pieces of correction that should be added, are easy estimating 
as:  
\begin{eqnarray}
&&\!\!\!\!\!\!\!\!\!\!\!\!\!\!\!\!\!
\Delta_R{\cal J}=\!\!\!
\int\limits_{r>R}\!\!\! d^3{\rm x}\frac{e^{ik|\vec{\rm R}-\xv|}}{4\pi|\vec{\rm R}-\xv|}\,
\phi(\xv),\quad
\Delta_R\Phi=-\frac{{\chi}_{\Lambda_{\n}}(-ikR)}{4\pi R}\!\!\!
\int\limits_{r>R}\!\!\!d^3{\rm x}\,e^{-ik(\n\cdot\xv)}\phi(\xv),
\label{34} \\
&&\!\!\!\!\!\!\!\!\!\!\!\!\!\!\!\!\!
|\Delta_R{\cal J}|<\frac{{\rm C}_{\rm M}}{({\rm M}-2)R^{{\rm M}-2}}, \quad \;\;
|\Delta_R\Phi|<\frac{{\rm C}_{\rm M}}{({\rm M}-3)R^{{\rm M}-2}}\left[1+ O(R^{-1})\right].
\label{34_1} 
\end{eqnarray}
Due to this corrections and arbitrariness of ${\rm M}\gg 1$, the expansion (\ref{29}), 
(\ref{33}) acquires an additional asymptotic meaning\cite{olv} in comparison with 
expansion (\ref{18}). The terms with $s=0,1$ reproduce the results\cite{k_t} 
(\ref{10}), (\ref{17}) of previous section. 

Coefficients of multipole expansion in (\ref{32_1}) may be defined by the following two 
different ways, whose equivalence, for $\xv=r\vs$:   
\begin{equation}
B^m_l(k)\equiv\! \int\! d\Omega(\n)\, \Phi(-k\n)\,\overset{*}{Y}\!{}^m_l(\n)=
4\pi \!\int \! d^3{\rm x}\,\frac{\psi_{l\,0}(kr)}{kr}\, i^{-l} \,
\overset{*}{Y}\!{}^m_l(\vs)\,\phi(\xv), 
\label{35} 
\end{equation}
follows from Eqs. (\ref{11}), (\ref{20})--(\ref{22}) and (\ref{24_0}), with inverse 
Fourier transform (\ref{32_1}).   

Substitution of multipole expansion of Green function (\ref{19}) into the integral 
(\ref{12}) by making use of definition (\ref{35}), after the same steps and under the 
same conditions for $\phi(\xv)$ as above leads to the {\bf multipole} expansion 
(\ref{31}), which with the help of expansion (\ref{24}) transcribes again as asymptotic 
expansion (\ref{29}) with the coefficients given by Eq. (\ref{32}). But the same 
expression (\ref{32}) is obtained by direct substitution of (\ref{32_1}) into the 
definition (\ref{30}) by means of (\ref{21}), (\ref{28}). This confirms the equivalence 
of expansions (\ref{29}) and (\ref{31}). \\
\noindent
{\sl Corollary 1.}
From (\ref{29}) with $C_s=C_s(k,\n)$, for the squared absolute value follows: 
\begin{eqnarray}
&&\!\!\!\!\!\!\!\!\!\!\!\!\!\!\!\!\!
(4\pi R)^2\left|{\cal J}(\vec{\rm R})\right|^2\sim
\nonumber \\
&&\!\!\!\!\!\!\!\!\!\!\!\!\!\!\!\!\! 
\sim|\Phi(-k\n)|^2\left\{1+\sum^\infty_{s=1}
\frac{i^s[C_s+(-1)^s\overset{*}{C}_s]}{(2kR)^s}+
\sum^\infty_{\zeta=2}\frac{i^\zeta}{(2kR)^\zeta}
\sum^{\zeta-1}_{s=1}(-1)^{\zeta-s}C_s\overset{*}{C}_{\zeta-s}\right\}.
\label{37}  
\end{eqnarray}
For $\Phi(-\ka)=\overset{*}{\Phi}(-\ka)$: $C_s= \overset{*}{C}_s$, so $s\mapsto 2n$ 
at the first sum over $s$. But the second internal sum over $1\leq s\leq\zeta-1$ in 
(\ref{37}) is equal to itself with the multiplier $(-1)^\zeta$ and thus 
$\zeta\mapsto 2n$ also. Dividing further this sum over $1\leq s\leq 2n-1$ into two 
parts: with $1\leq s\leq n$ and $n+1\leq s\leq 2n-1$, and puting for the second sum 
$s=2n-s^\prime$, one finds: 
\begin{eqnarray}
&&\!\!\!\!\!\!\!\!\!\!\!\!\!\!\!\!\!\!\!\! 
(4\pi R)^2\left|{\cal J}(\vec{\rm R})\right|^2 \sim
\Phi^2(-k\n)\left\{1+\sum^\infty_{n=1}\frac{\Upsilon_n}{(2kR)^{2n}}\right\}, 
\;\mbox{ with: }\; \Upsilon_n=\Upsilon_n(k,\n), 
\label{38} \\
&&\!\!\!\!\!\!\!\!\!\!\!\!\!\!\!\!\!\!\!\!  
\Upsilon_n=(-1)^n\left[2C_{2n}+\sum^{2n-1}_{s=1}(-1)^{s}C_sC_{2n-s}\right]
=(-1)^n \sum\limits^{2n}_{s=0}(-1)^{s}C_sC_{2n-s}, 
\label{39}  \\
&&\!\!\!\!\!\!\!\!\!\!\!\!\!\!\!\!\!\!\!\! 
\mbox{or: }\;
\Upsilon_n=
2(-1)^n\sum\limits^{n}_{s=0}(-1)^{s}C_s C_{2n-s}-(C_n)^2, \quad 
\Upsilon_1=(C_1)^2-2C_2,  
\label{39_0} 
\end{eqnarray}
what, due to ``homogeneity'' over $C_s$ relative to index $s$ ``like a power'', 
coincides with the relations\cite{N_shk} for dimensional analogs of coefficients 
$C_s$ and $\Upsilon_s$. \\
\noindent
{\sl Corollary 2.} Asymtotic expansion of the integral 
$\widetilde{\cal J}(\vec{\rm R})={\cal J}(-\vec{\rm R})$ (\ref{1}) is obtained by 
substitutions $\n\mapsto -\n$, ${\cal L}_{-\n}={\cal L}_{\n}$ (\ref{7}) in Eqs. 
(\ref{29})--(\ref{30_0}), (\ref{33}), (\ref{37})--(\ref{39_0}).\end {proof} 

The explicit Eqs. (\ref{29}), (\ref{30}) and recurrent relation (\ref{30_0}) may be 
compared with seminumerical calculations\cite{N_shk,NN_shk} of the same expansion, 
that for arbitrary $\Phi(\q)\in S(\vec{\rm R}^3_\q)$ have to give the same coefficients 
for the same orders $R^{-s}$. For the first correction this is easily seen from 
Eqs. (\ref{14})--(\ref{17}) and the Eq. (8) of Ref. 2 
respectively with substitution 
$\vec{\nabla}^2_q-(\n\cdot\vec{\nabla}_q)^2\leftrightharpoons (\n\times\vec{\nabla}_q)^2$.  
The same takes place for the next few corrections computed there. 
It is clear that any common multiplicative dependence of function $\Phi(\q)$ on 
$|\q|=k$ does not affect the coefficients $C_s(k,\n)$ (\ref{30}) and 
$\Upsilon_n(k,\n)$ (\ref{39}). 
But there are no ways to trace this property for arbitrary order $s$ with the only 
numerically calculable coefficients in Eqs. (41)--(46) of Ref. 2 
An important advantage of the results (\ref{29})--(\ref{30_0}) for coefficients of 
asymptotic expansion is not only the explicit arbitrariness of their order $s$, 
but the elucidating, that in fact they are defined by dependence of $\Phi(\mp k\n)$ on 
the {\sl unit vector $\n$ only}. This essentially reduces the number of possible 
degrees of freedom and simplifies the further analytical consideration. Indeed, when 
$B^m_l(k)=0$ for $l>j$, the dependence (\ref{30_0}), (\ref{32}), (\ref{39_0}) of 
$\Upsilon_s$ on function $\Phi(\q)$ (\ref{32_1}) leads to termination of the asymptotic 
expansion: $C_s=0=\Upsilon_s$, for $s>j$, which is also clear from Eq. (\ref{31}). 
If one has:  
\begin{eqnarray}
&&\!\!\!\!\!\!\!\!\!\!\!\!\!\!\!\!\! 
\Phi(-k\n) =\!\! \sum^j_{m=-j}\!\! B^m_j(k) Y^m_j(\n),\;\mbox{ then: }\; 
C_n=\frac{(j+n)!}{n!(j-n)!},\;\;\Upsilon_1 = 2j(j+1)>0, 
\label{40} \\
&&\!\!\!\!\!\!\!\!\!\!\!\!\!\!\!\!\! 
\Upsilon_j=(C_j)^2,\quad 
\Upsilon_n=\frac{2(-1)^n}{(2n)!}\sum\limits^{n}_{s=0}(-1)^s{\rm C}^s_{2n}
\frac{(j+s)!(j+2n-s)!}{(j-s)!(j-2n+s)!}-(C_n)^2,
\label{40_0}
\end{eqnarray}
where $0\leq n\leq j$, ${\rm max}(0,2n-j)\leq s\leq n$ and ${\rm C}^s_{2n}$ are binomial 
coefficients.
\section{Neutrino deficit and wave packets}

At diagrammatic treatment of neutrino 
oscillation\cite{N_shk,NN_shk,Grimus,Beuthe,NN,ahm} the function $\Phi(\q)$ in Eq. 
(\ref{1}) represents an overlap function ${\cal F}(q)={\cal F}_C(q){\cal F}_D(q)$ 
defined for $q^\mu=(q^0,\q)$ as a real-valued product of convolutions of the wave packets 
of external particles participating in the processes of (anti-) neutrino creation $\{C\}$ 
and detection $\{D\}$ at respective space-time points. 
According to the Feynman rules for respective tree amplitude\cite{Grimus,Beuthe,NN,ahm} 
one has to deal with asymptotic expansion of ${\cal J}(\vec{\rm R})$ (\ref{1}) for the 
case of antineutrino\cite{Grimus} or with expansion of $\widetilde{\cal J}(\vec{\rm R})$ 
for the case of neutrino,\cite{Beuthe} where the vector 
$\vec{\rm R}=R\n=\Xv_{D}-\Xv_{C}$ for both cases defines the macroscopic geometrical 
parameters of observation:\cite{Beuthe} distance $R$ and direction $\n$.   
The event rate is obtained to be proportional to $|{\cal J}(\vec{\rm R})|^2$ or 
$|\widetilde{\cal J}(\vec{\rm R})|^2$ respectively (see for detail Refs. 2,5,6,7)
that due to (\ref{38}) in the leading and next 
to leading orders reads: 
\begin{equation}
\propto 
\frac{\Phi^2(\mp k\n)}{R^2}\left\{1+\frac{\Upsilon_1(k,\pm\n)}{(2kR)^2}+\ldots\right\}.    
\label{41_00}
\end{equation}
Here the absolute value of momentum $k=\sqrt{q^2_0-m^2_j}$ is defined by neutrino mass 
$m_j$ and its mean energy $q^0=\mp Q^0$ near its most probable energies 
$Q^0_{C}\approx Q^0_{D}$, defined in turn\cite{Grimus,Beuthe,NN,ahm} 
by fixed parameters of respective external wave packets, such as their masses $m_a$, 
4-vectors of most probable energy-momentum $p_a$ and corresponding widths 
$\sigma_a\ll m_a$, because in the plane-wave limit, $\sigma_a\to 0$, $\forall\,a$  
the overlap function keeps exact energy-momentum conservation for both vertices, fully 
destroying the neutrino-oscillations pattern,\cite{Beuthe} 
${\cal F}(q)|_{\sigma_a\to 0}\propto\delta_4(q\pm Q_{C})\delta_4(q\pm Q_{D})$, 
where $Q_{C}$ becomes then a full 4-momentum incoming into (anti-) neutrino 
propagator, whereas $Q_{D}$ is full its outgoing 4-momentum. 
For $\sigma_a>0$ the function ${\cal F}(q)$ can provide thus only approximate  or 
``smeared'' conservation\cite{NN} of some mean (anti-) neutrino energy 
$Q^0\approx Q^0_{C}\approx Q^0_{D}$ and some mean (anti-) neutrino 3-momentum 
$\vec{\rm Q}\approx\vec{\rm Q}_{C}\approx\vec{\rm Q}_{D}$, whose 
explicit form depends on the chosen models of wave packets and the used 
approximations.\cite{Beuthe,NN,ahm} 
So, the function $\Phi(\mp k\n)={\cal F}(q^0,\q)|_{\q=\mp k\n}$ may be more or less 
sharply peaked\cite{Beuthe,NN} and varies rapidly: near the point 
$k\n\simeq\vec{\rm Q}=|\vec{\rm Q}|\rr$, for the more rough case\cite{Beuthe} (a); 
or near the two adjacent points $k\n\simeq\vec{\rm Q}_{C,D}=|\vec{\rm Q}_{C,D}|\rr_{C,D}$, 
where $\rr^2_{C,D}=1$, for the more precisely case\cite{NN} (b). For the case (a), up 
to fully unessential now multiplicative dependence on $k$ we remain with a 
function $\Phi(\mp k\n)\mapsto f(\xi)$ of the one dimensionless variable 
$\xi=(\rr\cdot\n)$ only. 
For the case (b) without full symmetry relative to vectors $\rr_{C,D}$ we remain  
with a function $\Phi(\mp k\n)\mapsto{\cal W}(\xi_{C},\xi_{D})$ of two independent 
variables $\xi_{C,D}=(\rr_{C,D}\cdot\n)$, which in Gassian case\cite{Beuthe,ahm} or in 
sharply peak approximation\cite{NN} always contains a function 
$\Phi(\mp k\n)\propto{\cal H}(\zeta)$ of the one dimensionless variable 
$\zeta=(\n \vec{\rm B}\n)>0$ with the dimensionless symmetrical positively defined 
tensor $\vec{\rm B}$ built on the components of these vectors, whose traceless part 
$\vec{\rm B}_0$ only is in fact necessary.  
Some technique with these functions is collected in Appendix B. 

When $\Upsilon_1(k,\pm\n)<0$, the last multiplier in expression (\ref{41_00}) leads to 
suppression factor to the usual inverse square law for the event rate that reads as: 
\begin{equation}
\propto \frac{1}{R^2}\left\{1-\frac{\varrho^2_0}{R^2}\right\},\;\,\mbox{ where: }\;\,
\varrho^2_0=-\, \frac{\overline{\Upsilon}_1(k,\pm\n)}{(2k)^2}, 
\label{41_01}
\end{equation}
and the averaging $\overline{\Upsilon}_1$ relative to directions of $\rr$ is implied for 
the case (a), that in fact approximately puts $\rr=\n$, $\xi=1$. For the case (b) one 
has to average relative to the vectors $\rr_{C,D}$ independently, what makes the final 
result more model-dependent. According to Refs. 2,3 
this suppression factor provided by the higher order corrections to Grimus-Stockinger 
formula (\ref{1}) naturally explains the 
observed\cite{anom} (anti-) neutrino deficit in reactor and others short-baseline 
experiments (see Refs. 2,3,9 
and the references therein). 
Such explanation does not require any ``new physics'' and perhaps\cite{N_shk} can be 
tested experimentally. 

Eqs. (\ref{30}), (\ref{39_0}) show that negative sign of coefficient $\Upsilon_1$ 
originates by the most rapid variation $\Phi(\mp k\n)$ with vector $\n$. Exponential 
variations are directly generated by the above limiting properties of the overlap 
function and are naturally arising in popular models of wave packets.\cite{Beuthe,NN,ahm} 
Since beyond oscillation problem it is not meaningless to leave overlap function be 
defined by one wave packet only, any model of wave packet also has to reproduce the 
above limiting properties of ${\cal F}_{C,D}(q)$. 

In the spirit of Ref. 6 
the following form of wave packet in momentum 
representation was suggested\cite{k_t2} for $q_\mu \zeta^\mu_a\equiv(q\zeta_a)>0$, with 
time-like 4-vector $\zeta^\mu_a$ of mean quantum numbers of wave packet, such that 
$\zeta^0_a>0$, $\zeta_{a\mu} \zeta^\mu_a\equiv\zeta^2_a>0$:
\begin{equation}
\zeta_a(p_a,\sigma_a)=p_a g_1(m_a,\sigma_a)+s_a g_2(m_a,\sigma_a), 
\quad
\phi^\sigma(\q,\p_a)={N}_\sigma\!\left(m_a,\zeta^2_a\right) e^{-(q\zeta_a)}, 
\label{41} 
\end{equation}
where both the momentum 4-vectors $q^\mu=(E_{\rm q},\q)$, $p^\mu_a=(E_{{\rm p}_a},\p_a)$ 
are on mass shell: $E_{\rm q}=\sqrt{\q^2+m^2_a}>0$, and so on, $s_a$- is a spin 4-vector, 
$g_{1}\gg |g_{2}|$- are some real functions\cite{k_t2} of mass $m_a$ and width  
$\sigma_a$, and $\aleph(\tau) m^{-2}_a={N}_\sigma>0$, is normalization 
constant with fixed asymptotic behavior\cite{k_t2} as a function of dimensionless 
invariant variable $\tau=m_a\sqrt{\zeta^2_a(p_a,\sigma_a)}$, at 
$\tau\to\infty$ $(\sigma_a\to 0)$, and at $\tau\to 0$ $(\sigma_a\to\infty)$:
\begin{equation}
\aleph(\tau)\underset{\tau\to\infty}{\longmapsto} 2(2\pi)^{3/2}\tau^{3/2}e^\tau, \;
\mbox{ and }\;
\aleph(\tau)\underset{\tau\to 0}{\longmapsto} \aleph(0)>0.
\label{41_0}
\end{equation}
Coordinate representation of this wave packet with fixed center $x_a$ for scalar case 
satisfies Klein-Gordon equation and for $x^\mu=(t,\xv)$ is defined\cite{k_t2} by 
Wightman function analytically continued\cite{blt} into the same future tube 
($V^+$: $\zeta^0_a>0$, $\zeta^2_a>0$) as:
\begin{eqnarray}
&&\!\!\!\!\!\!\!\!\!\!\!\!\!\!\!\!\!\!\!\!\!\!\!\! 
F_{p_ax_a}(x)= e^{-i(p_ax_a)}\!\int\!\frac{d^3{\rm q}}{(2\pi)^3 2E_{\rm q}}\,
\phi^\sigma(\q,\p_a)e^{-i(q(x-x_a))}=
\nonumber \\
&&\!\!\!\!\!\!\!\!\!\!\!\!\!\!\!\!\!\!\!\!\!\!\!\! 
= (-i)e^{-i(p_ax_a)}{N}_\sigma D^-_{m_a}\!\left(x-x_a-i\zeta_a(p_a,\sigma_a)\right), 
\quad (g_2\equiv 0). 
\label{45}
\end{eqnarray}
This wave packet conforms with general requirements of quantum field theory\cite{blt} 
and due to (\ref{41_0}) admits adequate description\cite{k_t2} of both the limits to the 
states localized in momentum and coordinate spaces at $\sigma_a\to 0$ and 
$\sigma_a\to\infty$ respectively:
\begin{equation}
\phi^\sigma(\q,\p_a)\underset{\sigma\to 0}{\longmapsto}
(2\pi)^3\,2E_{\rm q}\delta_3(\q-\p_a),\;\quad 
\phi^\sigma(\q,\p_a)\underset{\sigma\to\infty}{\longmapsto} { N}_\infty. 
\label{45_1}
\end{equation} 
Furthermore, because $2(qp_a)=2m^2_a-(q-p_a)^2$, the nonrelativistic limit of 
(\ref{41}), (\ref{45}) for $g_1=\sigma^{-2}_a$ exactly reproduces\cite{k_t2} 
the usual Gaussian profiles in coordinate and momentum representations  
independently of normalization arbitrariness (\ref{41_0}). 

To get most simple example preserving oscillations we take an overlap function with 
one wave packet (\ref{41}), (\ref{45}) in the one vertex representing the another vertex 
by infinitely heavy nucleus in the spirit of Kobzarev model\cite{kbz}. For antineutrino 
this may be achieved for example in the Grimus-Stockinger model\cite{Grimus} with 
nucleons in creation vertex $\{C\}$ fixed in infinitely heavy nucleus and the bound state 
of initial electron in detection vertex $\{D\}$ replaced by above wave packet state for 
free electron. 
For the case of neutrino one can take the decay $\pi^+\to\mu^++\nu_\mu$ with wave packet 
(\ref{41}) for pion state and plane wave for muon in $\{C\}$- vertex taking $\{D\}$- 
vertex as in the Kobzarev model.\cite{kbz,Beuthe} The overlap functions\cite{Beuthe} 
${\cal F}_{C,D}(-p)$ for the first case (compare with Eq. (11) from 
Ref. 4 
for $\widetilde{J}_\lambda=$ const), with 
$E_{\rm w}=\sqrt{\vec{\rm w}^2+m^2_e}$, are reduced to: 
\begin{eqnarray}
&&\!\!\!\!\!\!\!\!\!\!\!\!\!\!\!\!\!\!\!
{\cal F}_{C}(-p)={\rm const}\;\delta(p^0-Q^0_C), \quad Q^0_C=\Delta M-U^0,  
\quad Q_D=K-W, 
\label{46}  \\
&&\!\!\!\!\!\!\!\!\!\!\!\!\!\!\!\!\!\!\! 
{\cal F}_{D}(-p)=(2\pi)^4\delta\left(p^0+E_{\rm w}-K^0\right)
\frac{\phi^\sigma_e(\vec{\rm w},\vec{\rm W})}{2E_{\rm w}}, \quad 
\vec{\rm w}=\vec{\rm K}-\p, \quad W^0=E_{\rm W}, 
\label{47}
\end{eqnarray}
where: $\Delta M$ is neutron-proton mass difference, $U^\mu=(U^0,\vec{\rm U})$ is 
4-momentum of created electron in vertex $\{C\}$; $W^\mu=(W^0,\vec{\rm W})$ is the 
most probable 4-mo\-men\-tum of incoming initial electron with 
$\zeta_e(m_e,\sigma_e)=g_1W$ in Eq. (\ref{41}), $w^\mu=(E_{\rm w},\vec{\rm w})$, and 
$K^\mu=(K^0,\vec{\rm K})$ is the total outgoing 4-momentum carried away by final electron 
with final antineutrino in vertex $\{D\}$. Note that 3-vectors $\vec{\rm U}$ and 
$\vec{\rm Q}_C$ become indefinite for this case. 
The product of these functions belongs to $S(\vec{\rm R}^3_\p)$ only for replaced 
argument of second delta-function with $E_{\rm w}\mapsto E_{\rm W}$, which appeares 
naturally in sharply peak approximation for 
$|\p-\vec{\rm Q}_D|=|\vec{\rm W}-\vec{\rm w}|\ll m_e$, leading to:  
\begin{eqnarray}
&&\!\!\!\!\!\!\!\!\!\!\!\!\!\!\!\!\!\!\!
{\cal F}(-p)\approx {\rm const}\;\delta(p^0-Q^0_C)\,\delta(p^0-Q^0_D)
\frac{N_{\sigma_e}}{2W^0}e^{-g_1(wW)},\;\mbox{ where: }\; g_1=\frac{1}{\sigma^2_e}, 
\label{48} \\
&&\!\!\!\!\!\!\!\!\!\!\!\!\!\!\!\!\!\!\!
(wW)=E_{\vec{\rm w}}W^0-(\vec{\rm w}\!\cdot\!\vec{\rm W})\approx
m^2_e+\frac 12 \left((\p-\vec{\rm Q}_D)^j(\delta^{jl}-{\rm V}^j{\rm V}^l)
(\p-\vec{\rm Q}_D)^l\right)=
\label{49_0} \\
&&\!\!\!\!\!\!\!\!\!\!\!\!\!\!\!\!\!\!\!
=m^2_e+\frac 12 \left\{\vec{\rm Q}^2_D-(\vec{\rm Q}_D\!\cdot\!\vec{\rm V})^2+\p^2-
2\left(\p\!\cdot\!\left[\vec{\rm Q}_D-\vec{\rm V}(\vec{\rm Q}_D\!\cdot\!\vec{\rm V})
\right]\right)-\vec{\rm V}^2(\p\vec{\rm B}\p) \right\}, 
\label{49} \\
&&\!\!\!\!\!\!\!\!\!\!\!\!\!\!\!\!\!\!\!
\mbox{for: }\; 
\vec{\rm V}=\frac{\vec{\rm W}}{W^0}=|\vec{\rm V}|\vec{\omega},\quad \vec{\omega}^2=1, 
\quad {\rm B}^{jl}=\omega^j\omega^l, \quad 
{\rm B}^{jl}_0=\omega^j\omega^l-\frac{\delta^{il}}3,
\label{49_1}
\end{eqnarray}
with vector $\vec{\rm V}$ as a most probable velocity of initial electron in detector. 
To define the coefficients $\Upsilon_s(k,\n)$ (\ref{39_0}) of asymptotic expansion 
(\ref{38}) by making use of Eqs. (\ref{29})--(\ref{30_0}),  only the last two summands in 
the line (\ref{49}) are now necessary, giving exactly the discussed here structure of the 
function $\Phi(\q)$ (where due to the absence of vector $\vec{\rm Q}_C$ its role 
partially plays vector $\vec{\rm V}$). 
For non-relativistic initial electron $|\vec{\rm V}|\ll 1$, and neglecting this velocity,   
for $\vec{\rm Q}_D=|\vec{\rm Q}_D|\rr$ and real $\lambda=g_1 k|\vec{\rm Q}_D|>0$, 
now it is enough to take (see Appendix B for details and another cases): 
\begin{eqnarray}
&&\!\!\!\!\!\!\!\!\!\!\!\!\!\!\!\!\!\!\!\!\!
\Phi(\q)=\Phi(-\p)=\Phi(-k\n)\underset{|\vec{\rm V}|\ll 1}{\longmapsto} 
e^{\lambda\xi},\quad \xi=(\rr\cdot\n)=\cos\Theta,
\;\mbox{ whence:}
\label{42} \\
&&\!\!\!\!\!\!\!\!\!\!\!\!\!\!\!\!\!\!\!\!\! 
C_1(k,\n) = 2\lambda \xi-\lambda^2(1-\xi^2), \quad 
\Upsilon_1(k,\n)= -4\lambda^2\left[1-(2+\lambda\xi)(1-\xi^2)\right]. 
\label{43}
\end{eqnarray} 
So $\Upsilon_1<0$, if $\lambda\xi<\left[(1-\xi^2)^{-1}-2\right]$, that 
$\forall\,\lambda>0$ includes interval $3\pi/4\leq\Theta\leq \pi$,   
\begin{equation}
\mbox{which for }\;\lambda\to\infty,\;\mbox{ dilates to: }\; 
0\leq \Theta \leq 1/\sqrt{\lambda},\quad 
\pi/2+1/\lambda  \leq \Theta \leq \pi. 
\label{44_1} 
\end{equation}
With the same function (\ref{42}) $\forall\,\lambda<0$ this condition implies interval 
$0\leq\Theta\leq \pi/4$,
\begin{equation}
\mbox{dilating to: }\; 
0\leq\Theta\leq \pi/2-1/|\lambda|,\quad \pi-1/\sqrt{|\lambda|}\leq\Theta\leq \pi, \;
\mbox{ for }\;|\lambda|\to\infty,  
\label{44_0} 
\end{equation}
that means the formal interchange in (\ref{44_1}) of forward and backward hemispheres   
$\Theta\mapsto\pi-\Theta$. Similarly for the above mentioned case with neutrino, one can 
obtain for $\Phi(\q)=\Phi(k\n)$ the same result (\ref{42}) with $\lambda>0$. 
Therefore for the both cases $\Upsilon_1(k,\n)<0$ near the forward direction (\ref{44_1}) 
with the narrow wave packet $\sigma_{e,\pi}\to 0$ and/or with the high energy $p^0=Q^0_D$ 
of (anti-) neutrino. Since due to Eq. (\ref{49_0}), only this region in fact contributes 
to the averaging relative to $\rr$, from (\ref{43}) for the parameter $\varrho_0$ 
(\ref{41_01}), with $k=\sqrt{(Q^0_D)^2-m^2_j}\approx|\vec{\rm Q}_D|$, and $Q^0_D=Q^0_C$ 
(\ref{46}), one finds: 
\begin{eqnarray} 
&&\!\!\!\!\!\!\!\!\!\!\!\!\!\!\!\!\!\!\!\!\!
\overline{\Upsilon}_1 \simeq -4\lambda^2, \quad 
\varrho^2_0\simeq\frac{4\lambda^2}{(2k)^2}=\frac{\vec{\rm Q}^2_D}{\sigma^4_e}, \quad 
\varrho_0\approx \frac{k}{\sigma^2_e}, 
\label{50} \\
&&\!\!\!\!\!\!\!\!\!\!\!\!\!\!\!\!\!\!\!\!\!
\mbox{where: }\;
m_j\leq \sigma_e \ll m_e,\quad k\leq Q^0_C\leq \Delta M-m_e.
\label{51} 
\end{eqnarray} 
For $k\simeq\Delta M-m_e\approx 0.783\,\mbox{MeV}$ anf 
$\sigma_e\sim m_j\simeq 0,2\,\mbox{eV}$ it follows $\varrho_0\simeq 3,86\,\mbox{m}$. 
Note that for real $\beta$- decay of heavy nucleus the value of antineutrino energy 
$Q^0_C\simeq k$ may be in many times greater. 
This qualitative analytical estimation illustrates and elucidates the numerical 
calculations\cite{N_shk,NN_shk} with more detailed models, that give the best 
value for this parameter $\varrho_0\simeq 3,5\,\mbox{m}$. 
The obtained analytical results (\ref{29})--(\ref{30_0}), (\ref{38}), (\ref{39_0}), 
together with relations from Appendix B can essentially simplify such calculations 
making them probably more transparent and model-independent. 

\section{Conclusions}
As shown in Appendix A, following the same way as in Lemma 1 the operator 
expansion (\ref{18}) for the Green function of Helmholtz equation may be generalized for 
arbitrary $N$- dimensional Euclidean space. 
The same concerns the assertion of Theorem 1 about asymptotic expansion for 
$N$- dimensional analog of integral (\ref{1}). 

The following comments are in order. 
Since the main principles of quantum field theory\cite{blt} require the wave packets 
belong to $S(\vec{\rm R}^3_{\q})$, the same holds true for their convolution and thus 
for $\Phi(\q)$, as was supposed in Theorem 1. This property ensures the possibility 
of further asymptotic expansion (\ref{29}), (\ref{33}) of integral (\ref{1}) and defines 
the order of leading corrections to Grimus-Stockinger asymptotic,\cite{Grimus} which is 
crucial for explanation\cite{N_shk,NN_shk} (\ref{41_01}) of (anti-) neutrino deficit. 

For electron in stationary bound state\cite{Grimus} with energy $W^0=E_b$ instead of the 
wave packet state the overlap function (\ref{47}), with the also indefinite now vectors 
$\vec{\rm W}$, $\vec{\rm K}$ and $\vec{\rm Q}_D$, becomes:  
${\cal F}_{D}(-p)=\delta\left(p^0+E_{b}-K^0\right)\widetilde{\psi}_b(\vec{\rm w})
\in C^\infty(\vec{\rm R}^3_{\p})$, where  
$|\widetilde{\psi}_b(\vec{\rm K}-\p)|\leq O(|\p|^{-4})$ for $|\p|\to\infty$. 
It may be argued following to the Lemma 1 of Ref. 2 and the Theorem of Ref. 4, that this 
fourth power fall-off should be enough for the well definiteness of the asymptotic 
expansion (\ref{29}) up to the term with $s=2$, and therefore the expansion (\ref{38}),  
(\ref{39_0}) up to the term with $n=1$, what preserves Eqs. (\ref{41_00}), (\ref{41_01}). 
On the other hand the infinite coherence length of neutrino oscillations for the 
stationary case\cite {Grimus} is also the price for the absence of external wave 
packets like (\ref{41}), (\ref{45}); only their presence in both the creation $\{C\}$ 
and detection $\{D\}$ vertices provides a finite value of coherence 
length.\cite{Beuthe,NN} So we conclude, that again as in the Ref. 4:  
``Therefore the physics under consideration seems to comply naturally with the 
mathematical requirements''.

Since the wave packet (\ref{41}) in momentum representation 
$\phi^\sigma(\q,\p_a)\in S(\vec{\rm R}^3_{\q})$, the same takes place\cite{blt} for its 
coordinate representation (\ref{45}). 
So, for fixed $t$, $F_{p_ax_a}(t,\xv)\in S(\vec{\rm R}^3_{\xv})$ obviously with infinite 
support and the same holds true for their convolution $\phi(\xv)$ in Eqs. (\ref{32_1}) 
and (\ref{12}). Therefore the expansion (\ref{33}) acquires also the additional 
asymptotic nature, with nonzero corrections (\ref{34}), governed by detailed form of 
asymptotic behavior of the function $\phi(\xv)$ itself. These corrections inevitable  
introduce an additional dimensional parameter and may also affect the behavior of 
${\cal J}(\vec{\rm R})$ at microscopically big but macroscopically small 
distances $R$, discussed here and in Ref. 2,3,9. 

\appendix 
\section{}  

For $N$-dimension space, defining: $\nu =(N-2)/2$, $a=(N-3)/2=\nu-\frac 12$,    
$l=0,1,2,\ldots$, $j=l+a$, with: 
$l(l+2\nu)+\nu^2-\frac 14=l(l+2\nu)+a(a+1)=j(j+1)$, for $\vec{\rm R}=R\n$, $\xv=r\vs$, 
$\ka=k\n$, one has\cite{b_e2,vil} Gegenbauer polinomials $C^{(\nu)}_l((\n\cdot\vs))$ as 
eigenfunction of $N$-dimension square angular momenta operator ${\cal L}^{(N)}_{\n}\!,$ 
where for $\vec{\nabla}_{\vec{\rm R}}$ in (\ref{3}), instead of (\ref{5}), (\ref{6}): 
$[(\vec{\partial}_{\n})_\beta,n_\alpha]=\delta_{\alpha\beta}-n_\alpha n_\beta$, 
$(\n\cdot\vec{\partial}_{\n})=0$, 
$(\vec{\partial}_{\n}\cdot\n)=N-1$, 
$x_\alpha\partial_\beta-x_\beta\partial_\alpha=
n_\alpha(\vec{\partial}_{\n})_\beta-n_\beta(\vec{\partial}_{\n})_\alpha=
i{\rm L}_{\alpha\beta}$ and 
$-\vec{\partial}^2_{\n}=\vec{\rm L}^2/2={\cal L}^{(N)}_{\n}$, so that:\cite{vil}   
\begin{eqnarray}
&&\!\!\!\!\!\!\!\!\!\!\!\!\!\!\!\!\!\!\!\!\!\!\!\! 
\vec{\nabla}^2_{\vec{\rm R}}=
\partial^2_R+\frac{N-1}R \partial_R-\frac{{\cal L}^{(N)}_{\n}}{R^2}=
\frac 1{R^{a+1}}\partial^2_R R^{a+1}-\frac{a(a+1)}{R^2}-\frac{{\cal L}^{(N)}_{\n}}{R^2}, 
\label{A_2} \\
&&\!\!\!\!\!\!\!\!\!\!\!\!\!\!\!\!\!\!\!\!\!\!\!\! 
{\cal L}^{(N)}_{\n}C^{(\nu)}_l((\n\cdot\vs))=l(l+2\nu)C^{(\nu)}_l((\n\cdot\vs)).
\label{A_3}
\end{eqnarray}
In definitions (\ref{24}), (\ref{24_1}), where,\cite{olv,T,b_e2,vil} for $l\mapsto j$, 
with $|{\rm arg}\,z-\beta_{1,2}|<\pi/2$:  
\begin{eqnarray}
&&\!\!\!\!\!\!\!\!\!\!\!\!\!\!\!\!\!\!\!\!
K_\lambda(z)=\frac 12 \int\limits^{\infty e^{-i\beta_1}}_{0 e^{i\beta_2}}
\frac{dt}{t}t^{\pm\lambda}\exp\left\{-\,\frac z2\left(t+\frac 1t\right)\right\}, 
\label{A_1} \\
&&\!\!\!\!\!\!\!\!\!\!\!\!\!\!\!\!\!\!\!\!
\mbox{and: }\; 
\int\limits^\infty_0\! dr\, \psi_{j\,0}(kr)\psi_{j\,0}(qr)=\frac{\pi}2\delta(q-k), 
\label{24_0} 
\end{eqnarray}
according to formulas (8.534), (8.532) of Ref. 12 
($H^{(1),(2)}_\nu(z)$ are the Hankel functions,\cite{olv,b_e2}) the plane wave and the 
Green function now read:\cite{vil,k_p} 
\begin{eqnarray}
&&\!\!\!\!\!\!\!\!\!\!\!\!\!\!\!\! 
e^{\mp i(\ka\cdot\xv)}=\sqrt{\frac 2\pi}\,
\frac{2^{\nu-1}\Gamma(\nu)}{(\mp ikr)^a kr}
\sum^\infty_{j=a}(2j+1)i^{\mp j}\,\psi_{j\,0}(kr)C^{(\nu)}_l((\n\cdot\vs)),
\label{A_4_0} \\
&&\!\!\!\!\!\!\!\!\!\!\!\!\!\!\!\! 
\left\langle\vec{\rm R}\biggr|\frac{1}{-\vec{\nabla}^2-k^2\mp i0}\biggl|\xv\right\rangle=
\!\int\!\frac{d^N{\rm q}}{(2\pi)^N}\frac{e^{i(\q\cdot(\vec{\rm R}-\xv))}}
{(\q^2-k^2\mp i0)}=
\label{A_4} \\
&&\!\!\!\!\!\!\!\!\!\!\!\!\!\!\!\! 
=\pm \frac{i}{4}\,\left(\frac{k}{2\pi}\right)^{\nu}
\frac{H_\nu^{\overset{(1)}{(2)}}
(k|\vec{\rm R}-\xv|)}{|\vec{\rm R}-\xv|^\nu}=
\sqrt{\frac \pi 2}\,\frac{(\mp ik)^a}{(2\pi)^{\nu+1}}\,
\frac{\chi_a\left(\mp ik|\vec{\rm R}-\xv|\right)} {|\vec{\rm R}-\xv|^{a+1}} = 
\label{A_5}\\
&&\!\!\!\!\!\!\!\!\!\!\!\!\!\!\!\! 
=\frac{\Gamma(\nu)}{4\pi^{\nu+1}}\,\frac{1}{(Rr)^{a+1}}
\sum^\infty_{j=a}(2j+1)\,\frac{i^{\mp j}}k\,\chi_j(\mp ikR)\,
\psi_{j\,0}(kr)C^{(\nu)}_l((\n\cdot\vs)), 
\label{A_6}
\end{eqnarray}
for $R>r$ and $l=j-a$,  
where $a$ and $j\geq a$ are both integer for odd $N$ or half integer for even $N$. 
So, for $R>r$, with again really not appearing operator $l\mapsto\Lambda^{(N)}_{\n}$, 
one obtains similarly the following generalization of Lemma 1 (\ref{18_0}), (\ref{18}):
\begin{eqnarray}
&&\!\!\!\!\!\!\!\!\!\!\!\!\!\!\!\!\!\!\!\!\!  
\sqrt{\frac \pi 2}\, \frac{(\mp ik)^a}{(2\pi)^{\nu+1}}\,
\frac{\chi_a\left(\mp ik|\vec{\rm R}-\xv|\right)} {|\vec{\rm R}-\xv|^{a+1}} =
\sqrt{\frac \pi 2}\, \frac{(\mp ik)^a}{(2\pi)^{\nu+1}}\,
\frac{\chi_{\Lambda^{(N)}_{\n}+a}(\mp ikR)}{R^{a+1}}\,e^{\mp ik(\n\cdot\xv)}\sim 
\label{A_7} \\
&&\!\!\!\!\!\!\!\!\!\!\!\!\!\!\!\!\!\!\!\!\!    
\sim \sqrt{\frac \pi 2}\, \frac{(\mp ik)^a}{(2\pi)^{\nu+1}}
\frac{e^{\pm ikR}}{R^{a+1}}\!
\left\{1+\sum^\infty_{s=1}\frac{ \displaystyle 
\prod\limits^s_{\mu=1}\!\left[{\cal L}^{(N)}_{\n}+a(a+1)-\mu(\mu-1)\right]}
{s!\,(\mp 2ikR)^s}\right\}\! e^{\mp ik(\n\cdot\xv)}\!. 
\label{A_8} 
\end{eqnarray}
Here (\ref{A_7}) is again an operator rewrite of (\ref{A_6}) via (\ref{A_4_0}) and 
positively defined self-adjoint operator with positive eigenvalue, from which  
$\chi_j(z)$ (\ref{24}) depends on again as a whole function (\ref{A_1}): 
\begin{equation}
{\cal L}^{(N)}_{\n}+a(a+1)+\frac 14=\left(\Lambda^{(N)}_{\n}+a+\frac 12\right)^2
\longmapsto \left(j+\frac 12\right)^2,  
\label{A_9}
\end{equation}
and again the asymptotic version\cite{olv,b_e2} of expansion (\ref{24}) is used. 
Due to (\ref{A_2}), instead of operator ${\cal L}^{(N)}_{\n}+a(a+1)$ in (\ref{A_8}),  
again may be used the operator from the square brackets of the radial Schr\"odinger 
equation:  
\begin{eqnarray}
&&\!\!\!\!\!\!\!\!\!\!\!\!\!\!\!\!\!\!\!\!\!\!\!\! 
\left[r^2\left(\frac 1{r^{a+1}}\,\partial^2_r\,r^{a+1}+k^2\right)\right]
\frac{\psi_{j\,0}(kr)}{r^{a+1}}=j(j+1)\frac{\psi_{j\,0}(kr)}{r^{a+1}}. 
\label{A_10}
\end{eqnarray}
\section{} 
The differential operators $\vec{\partial}_{\n}$ and 
${\cal L}_{\n}=-\vec{\partial}^2_{\n}$ defined in Eqs. (\ref{3})--(\ref{7}) 
have the following properties for any differentiable scalar functions $f(\xi)$, 
${\cal W}(\xi,\varsigma)$, ${\cal H}(\zeta)$ with $\xi$ from Eq. (\ref{42}), 
$\varsigma=(\vec{\varrho}\cdot\n)$, $\zeta_m=(\n\vec{\rm B}^m\n)$, $\zeta\equiv\zeta_1$, 
$Tr\{\vec{\rm B}^m\}=3\overline{\zeta}_m$, $\zeta_{m0}=\zeta_m-\overline{\zeta}_m$: 
\begin{eqnarray}
&&\!\!\!\!\!\!\!\!\!\!\!\!\!\!\!\!\!\!\!\!\!\!\!\! 
\vec{\partial}_{\n}f(\xi)=
\vec{\partial}_{\n}f((\rr\cdot\n))=\rr_\perp\partial_\xi f(\xi)+f(\xi)\vec{\partial}_{\n}, 
\quad \rr_\perp=\rr-\n(\rr\cdot\n),  
\label{B_1} \\
&&\!\!\!\!\!\!\!\!\!\!\!\!\!\!\!\!\!\!\!\!\!\!\!\! 
(\vec{\partial}_{\n}\xi)=\rr_\perp,\quad 
(\vec{\partial}_{\n}\varsigma)=\vec{\varrho}_\perp, \quad 
\left((\vec{\partial}_{\n})^j\zeta_m\right)=
2\left[(\vec{\rm B}^m\n)^j-{\rm n}^j\zeta_m\right],
\label{B_1_1} \\
&&\!\!\!\!\!\!\!\!\!\!\!\!\!\!\!\!\!\!\!\!\!\!\!\! 
{\cal L}_{\n}\xi=2\xi, \quad {\cal L}_{\n}\varsigma=2\varsigma,\quad 
{\cal L}_{\n}\zeta_m=2\left[3\zeta_m-Tr\{\vec{\rm B}^m\}\right]\equiv 6\zeta_{m0}=
{\cal L}_{\n}\zeta_{m0},
\label{B_4} \\
&&\!\!\!\!\!\!\!\!\!\!\!\!\!\!\!\!\!\!\!\!\!\!\!\! 
f(\xi)C_1(\xi)={\cal L}_{\n}f(\xi)=\partial_{\xi}(\xi^2-1)\partial_{\xi}f(\xi)=
\left[(\xi^2-1)\partial^2_{\xi^2}+2\xi\partial_{\xi}\right]f(\xi), 
\label{B_2} \\
&&\!\!\!\!\!\!\!\!\!\!\!\!\!\!\!\!\!\!\!\!\!\!\!\! 
f(\xi)C_2(\xi)=\left[(6\xi^2-2)\partial^2_{\xi^2}+4\xi(\xi^2-1)\partial^3_{\xi^3}+
\frac 12(\xi^2-1)^2\partial^4_{\xi^4} \right]f(\xi), 
\label{B_3} \\
&&\!\!\!\!\!\!\!\!\!\!\!\!\!\!\!\!\!\!\!\!\!\!\!\! 
{\cal L}_{\n}{\cal W}(\xi,\varsigma)=\left[\partial_{\xi}(\xi^2-1)\partial_{\xi}+
\partial_{\varsigma}(\varsigma^2-1)\partial_{\varsigma}+
2\left(\xi\varsigma-(\vec{\varrho}\cdot\rr)\right)\partial^2_{\xi\varsigma}\right]
{\cal W}(\xi,\varsigma), 
\label{B_5} \\
&&\!\!\!\!\!\!\!\!\!\!\!\!\!\!\!\!\!\!\!\!\!\!\!\! 
{\cal L}_{\n}f(\xi)g(\varsigma)=
g(\varsigma){\cal L}_{\n}f(\xi)+f(\xi){\cal L}_{\n}g(\varsigma)-
2(\rr_\perp\cdot\vec{\varrho}_\perp)\partial_\xi f(\xi)\partial_\varsigma g(\varsigma), 
\label{B_5_1} \\
&&\!\!\!\!\!\!\!\!\!\!\!\!\!\!\!\!\!\!\!\!\!\!\!\! 
{\cal H}(\zeta)C_1(\zeta)=
{\cal L}_{\n}{\cal H}(\zeta)=\left\{6\zeta_0\partial_\zeta+
4(\zeta^2-\zeta_2)\partial^2_{\zeta^2}\right\}{\cal H}(\zeta), 
\quad \; \zeta_0\equiv \zeta_{10}, 
\label{B_6} \\
&&\!\!\!\!\!\!\!\!\!\!\!\!\!\!\!\!\!\!\!\!\!\!\!\! 
{\cal H}(\zeta)C_2(\zeta)=
\frac 12\left[{\cal L}^2_{\n}-2{\cal L}_{\n}\right]{\cal H}(\zeta)= 
\label{B_6_1} \\
&&\!\!\!\!\!\!\!\!\!\!\!\!\!\!\!\!\!\!\!\!\!\!\!\! 
=\left\{12\zeta_0\partial_{\zeta}+
6\left[(4\zeta+3\zeta_0)\zeta_0+6(\zeta^2-\zeta_2)-2\zeta_{20}\right]
\partial^2_{\zeta^2}+ \right.
\nonumber \\
&&\!\!\!\!\!\!\!\!\!\!\!\!\!\!\!\!\!\!\!\!\!\!\!\! 
\left. +8\left[(4\zeta+3\zeta_0)(\zeta^2-\zeta_2)+2(\zeta_3-\zeta\zeta_2)\right]
\partial^3_{\zeta^3}+8(\zeta^2-\zeta_2)^2\partial^4_{\zeta^4}\right\}{\cal H}(\zeta).
\label{B_7}
\end{eqnarray}
When $f(\xi)\mapsto e^{\lambda\xi}$, Eqs. (\ref{B_2}), (\ref{B_3}) with 
$\partial_{\xi}\mapsto\lambda$ immediately give Eqs. (\ref{43}). 
Similarly for ${\cal H}(\zeta)\mapsto e^{\lambda\zeta}$ one has 
$\partial_{\zeta}\mapsto\lambda$ in Eqs. (\ref{B_6})--(\ref{B_7}), leading to:. 
\begin{eqnarray}
&&\!\!\!\!\!\!\!\!\!\!\!\!\!\!\!\!\!\!\!\!\!\!\!\! 
\Upsilon_1=-8\lambda\left[a\lambda^2+b\lambda+c\right],\;\mbox{ with: }\;
a=4\left[2\zeta^3+\zeta_3-3\zeta\zeta_2\right], 
\label{B_8} \\
&&\!\!\!\!\!\!\!\!\!\!\!\!\!\!\!\!\!\!\!\!\!\!\!\! 
b=3\left[2\zeta_0\zeta+3(\zeta^2-\zeta_2)-\zeta_{20}\right],\quad 
c=3\zeta_0,\quad \zeta_m=(\n\vec{\rm B}^\sigma\vec{\rm B}^{m-\sigma}\n)>0,
\label{B_9} \\
&&\!\!\!\!\!\!\!\!\!\!\!\!\!\!\!\!\!\!\!\!\!\!\!\! 
(\zeta_m)^2\leq \zeta_{2\sigma}\zeta_{2m-2\sigma}, 
\quad 
\zeta_m\geq\zeta\zeta_{m-1}\geq\zeta^2\zeta_{m-2}\geq\ldots\geq\zeta^k\zeta_{m-k}
\ldots\geq\zeta^m, 
\label{B_10}
\end{eqnarray}
due to positive definiteness of $\vec{\rm B}$ and Cauchy-Schwarz inequality. 
Formally the case (\ref{B_8}), (\ref{B_9}) appeares in Eqs. (\ref{49}) for 
``collinear'' events with $\vec{\rm K} \parallel \vec{\rm V}$ only, for  
$\lambda=g_1 k^2 \vec{\rm V}^2/2>0$ and degenerate tensor $\vec{\rm B}$ (\ref{49_1}) 
evidently realizing all inequalities (\ref{B_10}). 
The structure of $\vec{\rm B}$ and $\lambda$ becomes much more complicated for the 
complete models.\cite{N_shk,NN} 
Nevertheless the most general case containing all variables simultaneously, like  
$\Phi(\mp k\n)\mapsto e^{\alpha\xi}e^{\beta\varsigma}e^{\lambda\zeta}$, may be 
considered by the same way using the above relations together.  

\section*{Acknowledgments}
Authors thank V. Naumov, D. Naumov, and N. Ilyin for useful discussions and Reviewer 
for constructive comments.


\begin{thebibliography}{99}

{\footnotesize
\bibitem{k_t} 
  S. E. Korenblit, D. V. Taychenachev, {\it Phys. Part. Nuc. Let.} {\bf 10} 1002 (2013).
  (arXiv: 1304.5192 [hep-th])

\bibitem{N_shk} 
  V. A. Naumov, D. S. Shkirmanov, {\it Eur. Phys. J.} {\bf C 73} 2627 (2013).  
  (arXiv: 1309.1011 [hep-th]) 
  
\bibitem{NN_shk} 
     D. V. Naumov, V. A. Naumov, D. S. Shkirmanov, 
     {\it Neutrino flavor transitions in quantum field theory and generalized 
     Grimus-Stockinger theorem, in Rep. on 41-st ITEP Int. Winter School of Phys.}, 
     (Moscow, Russia, 2013). 
     (http://www.itep.ru/ws/2013/lec/YSS/Feb15/Feb15/5-prezentaciya\_itep.pdf) 

\bibitem{Grimus}
  W. Grimus, P. Stockinger, {\it Phys. Rev.} {\bf D 54} 3414 (1996). 

\bibitem{Beuthe} 
   M. Beuthe, {\it Phys. Rep.} {\bf 375} 105 (2003).  

\bibitem{NN} 
  V. A. Naumov, D. V. Naumov, {\it J. Phys. G: Nucl. Part. Phys.} {\bf 37} 105014 (2010). 
  (arXiv: 1008.0306v2 [hep-ph])

\bibitem{ahm} 
   E. Kh. Akhmedov, J. Kopp, {\it JHEP} {\bf 04} 008 (2010). 

\bibitem{olv} 
  F. Olver, {\it Introduction to asymptotics and special function.} 
  (Acad. Press, NY, 1974). 

\bibitem{anom}  
    G. Mention, M. Fechner, Th. Lasserre, Th.A. Mueller, D. Lhuillier, et al., 
    {\it Phys. Rev.} {\bf D 83} 073006 (2011). (arXiv: 1101.2755 [hep-ex])

\bibitem{k_t2} 
   S. E. Korenblit, D. V. Taychenachev, {\it On uniqueness of the wave packet in QFT 
   and neutrino oscillation problem, in Proc. of XIII Baikal School on Fund. Phys.:  
   Interaction of Fields and Radiation with Matter},
   (eds.~E. V. Devyatova, ISEP, Irkutsk, Russia, 2013). pp 109-114. (in Russian)
   (http://bsfp.iszf.irk.ru/images/2013/Sbornik-2013/Section-A/Taychenachev-109-114.pdf)

\bibitem{T} 
  J. R. Taylor, {\it Scattering Theory.} (J. Wiley \& Sons Inc. NY, 1972).

\bibitem{b_e2} 
   I. S. Gradshteyn, I. M. Rizhik, {\it Tables of Integrals, Sums, Series and Products, 
   7th edition}, (Acad. Press, San Diego, 2007).

\bibitem{vil} 
   N. Ya. Vilenkin, {\it  Special Functions and Group Representations Theory.}  
   (Nauka, Moscow,  1991). (in Russian) 

\bibitem{b_l} 
  L. C. Biedenharn, J. D. Louck, {\it Angular Momentum in Quantum Physics. Theory and 
  Application.} (Addison-Wesley Pub. Com., Reading, Massachusetts, 1981).

\bibitem{blt} 
   N. N. Bogoliubov, A. A. Logunov, I. T. Todorov, {\it Foundation of Axiomatic Approach 
   in Quantum Field Theory}.  (Nauka, Moscow, 1969). (in Russian). 

\bibitem{kbz} 
   I. Yu. Kobzarev, B. V. Martemyanov, L.B.Okun, M.G.Shchepkin, {\it Sov. J. Nucl. Phys.}
   {\bf 35} 708 (1982) [{\it Yad. Fiz.} {\bf 35} 1210 (1982)].

\bibitem{k_p} 
  S. E. Korenblit, Yu. V. Parfenov, {\it Phys. Atom. Nucl.}, {\bf 56} 483 (1993),    
[{\it Yad. Fiz. } {\bf 56} 105 (1993], (arXiv: 0901.0341 [hep-th])

}
\end{thebibliography}
\end{document}